\newtheorem{theorem}{Theorem}
\renewcommand{\vec}[1]{\mathop{}\overrightarrow{#1}}
\newtheorem*{EDL}{Even Degree Lemma}{\bfseries}{}
\newcommand{\sig}       {\mm{\textrm{sgn}}}
\newcommand {\mm}[1] {\ifmmode{#1}\else{\mbox{\(#1\)}}\fi}
\newcommand{\Mspace}        {\mm{{\mathbb M}}}
\newcommand{\Jspace}        {\mm{{\mathbb J}}}
\newcommand{\Rspace}        {\mm{{\mathbb R}}}
\renewcommand{\star}        {\mm{\textrm{St}}}
\newcommand{\link}       {\mm{\textrm{Lk}}}
\newcommand{\x}{\mm{\mathbf{x}}}
\renewcommand{\a}{\mm{\mathbf{a}}}
\renewcommand{\b}{\mm{\mathbf{b}}}
\renewcommand{\v}{\mm{\mathbf{v}_1}}
\newcommand{\vv}{\mm{\mathbf{v}_2}}
\newcommand{\m}{\mm{\mathbf{m}}}
\newcommand{\abv}{\mm{\mathbf{abv}_1}}
\newcommand{\avvb}{\mm{\mathbf{av}_2\mathbf{b}}}
\renewcommand{\vec}[1]{\mm{\mathbf{#1}}}
\newcommand{\ab}{{\a\b}}
\newcommand{\bil}{{\mathrm{bi}}}
\newcommand{\lin}{{\mathrm{li}}}
\definecolor{green}{rgb}{0.0, 0.6, 0.0}
\definecolor{purple}{rgb}{0.489338391672278,   0.385854299896855,   0.484000480546439}
\definecolor{blue}{rgb}{0.230032653793089,   0.298999333865925,   0.754001759886747}
\definecolor{red}{rgb}{0.705997696240579,   0.016126465515202,  0.150001123391885}
\definecolor{blueish}{rgb}{0.615016326896545,   0.649499666932962,   0.877000879943374}
\definecolor{redish}{rgb}{0.852998848120289,   0.508063232757601,   0.575000561695942}
\newcommand\thefont{\expandafter\string\the\font}
\title{Reduced Connectivity for Local Bilinear Jacobi Sets}
\author{Daniel Klötzl\thanks{e-mail: daniel.kloetzl@visus.uni-stuttgart.de}\\ %
     \scriptsize University of Stuttgart (VISUS) %
     \and Tim Krake\thanks{e-mail: tim.krake@visus.uni-stuttgart.de}\\ %
     \scriptsize University of Stuttgart (VISUS) %
     \and Youjia Zhou\thanks{e-mail: zhou325@sci.utah.edu}\\
     \scriptsize University of Utah (SCI) %
     \and Jonathan Stober\thanks{e-mail: jonathan.stober@itlr.uni-stuttgart.de}\\
     \scriptsize University of Stuttgart (ITLR)
     \and Kathrin Schulte\thanks{e-mail: kathrin.schulte@itlr.uni-stuttgart.de}\\
     \scriptsize University of Stuttgart (ITLR)%
     \and Ingrid Hotz\thanks{e-mail: ingrid.hotz@liu.se}\\ %
     \scriptsize Linköping University (ITN) %
     \and Bei Wang\thanks{e-mail: beiwang@sci.utah.edu}\\ %
     \scriptsize University of Utah (SCI) %
     \and Daniel Weiskopf\thanks{e-mail: daniel.weiskopf@visus.uni-stuttgart.de}\\ %
     \scriptsize University of Stuttgart (VISUS) }
\abstract{
We present a new topological connection method for the local bilinear computation of Jacobi sets that improves the visual representation while preserving the topological structure and geometric configuration. To this end, the topological structure of the local bilinear method is utilized, which is given by the nerve complex of the traditional piecewise linear method. Since the nerve complex consists of higher-dimensional simplices, the local bilinear method (visually represented by the 1-skeleton of the nerve complex) leads to clutter via crossings of line segments. Therefore, we propose a homotopy-equivalent representation that uses different collapses and edge contractions to remove such artifacts. Our new connectivity method is easy to implement, comes with only little overhead, and results in a less cluttered representation.
} 
\begin{document}

\newpage

\maketitle

\section{Introduction}

The topological study of multiple scalar fields comprises a broad range of mathematical instruments.
One of these tools is the Jacobi set, a topological descriptor that is based on Morse theory.
For two Morse functions, it is defined as the set of points where the gradients align. So far, there are two computational methods for the numerical computation and extraction of Jacobi sets. 
Edelsbrunner and Harer~\cite{EdelsbrunnerHarer2002} introduced an edge-based computation of Jacobi sets that is based on a \emph{piecewise linear (PL)} scheme.
This PL method produces a non-smooth and inaccurate representation of the Jacobi set due to an edge-based identification and PL interpolation.
As a result, this leads to zig-zag patterns and discretization artifacts. 

Based on the PL method, Klötzl et al.~\cite{Kloetzl:2022:LocalBilinearJS} presented a \emph{local bilinear (LB)} computation of Jacobi sets that enhances the geometry while maintaining the topology of the PL method.
The LB method introduces so-called Jacobi set points, a geometrically more precise (in terms of the gradient alignment measure~\cite{Kloetzl:2022:LocalBilinearJS}) representation of the Jacobi set.
The connection of these points leads to a line segment representation of the Jacobi set that is equal to the nerve complex of the PL method.
As the nerve complex consists of higher-dimensional simplices, the visual representation with its 1-skeleton description has clutter via crossings of line segments (see Figure~\ref{fig:teaser} (left), in particular, bottom right corner).

In this paper, we want to address these issues to obtain an improved visual representation while preserving the topological structure and geometrical configuration.
To achieve this, we apply the theory of topological collapses to the LB method, leading to different collapsing strategies for the higher-dimensional simplices.
Our proposed solution is a combination of these, resulting in a new homotopy-equivalent representation of the LB method with \emph{reduced connectivity} (see Figure~\ref{fig:teaser} (right)).
This new connectivity construction method preserves the topology and inherits important properties, such as the \hyperref[lem:EDL]{Even Degree Lemma}.
In addition, the geometrical configuration is maintained and the representation is less cluttered due to its reduced connectivity.
The proposed algorithm to compute the connectivity comes with only little overhead and is easy to implement.

The remainder of the paper is structured as follows: 
After a discussion of related work in Section~\ref{sec:related-work}, Section~\ref{sec:background} recaps topological foundations (simplicial collapse, strong collapse, and edge contraction) and summarizes the original LB method for the computation of Jacobi sets.
In Section~\ref{sec:our-approach}, we propose our new connectivity construction method, starting with design goals that clarify the desired topological and geometrical properties.
Then, our method is derived with the help of topological collapses and edge contractions.
We conclude this section with the formulation of the algorithm.
In the subsequent section, our method is evaluated qualitatively and quantitatively.
Finally, Section~\ref{sec:conclusion} concludes the paper.

\section{Related Work}
\label{sec:related-work}

The Jacobi set~\cite{EdelsbrunnerHarer2002} is an important tool in topological data analysis that describes the relationship between multiple scalar fields in terms of their gradient behavior. 
It has been employed in various scientific applications to model the relationship between salinity and temperature of water in oceanography~\cite{BarnettPierceSchnur2001}, define the critical paths of gravitational potentials of celestial bodies~\cite{EdelsbrunnerHarer2002}, detect tree rings~\cite{MakelaOpheldersQuigley2020}, extract ridges in images~\cite{NorgardBremer2012}, track features in time-varying simulations~\cite{BremerBringaDuchaineau2007}, and estimate the interrelationships between geophysical multi-fields~\cite{Aramonova2017}, to name a few. 

The classic algorithm that computes the Jacobi set~\cite{EdelsbrunnerHarer2002} is designed for piecewise linear functions defined on triangulations, which is known to produce a large number of discretization artifacts (such as small loops and zig-zag patterns) that can skew the analysis~\cite{BhatiaWangNorgard2015}.   
Recent work by Kl{\"o}tzl et al.~\cite{Kloetzl:2022:LocalBilinearJS} improves upon the piecewise linear approach by utilizing bilinear interpolation to obtain a smoother and more accurate geometric representation of the Jacobi set while preserving its topology. 
Our paper adopts their bilinear computation set but extends their approach with a new topological connection method that improves its visual representation while preserving its topological structure. 

Instead of improving upon interpolation schemes to enhance the Jacobi set representation, a number of previous works introduce controlled simplification of a Jacobi set, oftentimes by ranking and removing parts of the Jacobi sets. 
The \emph{indirect simplification} of the Jacobi set (e.g.,~\cite{BremerBringaDuchaineau2007,LuoSafWan2009}) simplifies the underlying functions to obtain a structurally and geometrically simpler representation, whereas the  \emph{direct simplification} (e.g.~\cite{BhatiaWangNorgard2015, EdelsbrunnerHarerNatarajan2004, NagarajNatarajan2011}) aims to identify and remove  portions of the Jacobi set that are deemed unimportant. 
Our work is different from Jacobi set simplification approaches as it employs simplicial collapses to improve the bilinear representation while preserving the geometry that is lost under simplification schemes.

Finally, a \emph{simplicial collapse} first introduced by Whitehead~\cite{Whitehead:1938tp} is a topological method that reduces a simplicial complex to a subcomplex that is homotopy-equivalent. It has been applied in, e.g.,~computational homology~\cite{Kaczynski:2004we},  discrete Morse theory~\cite{Forman1998}, and ~topological data analysis, e.g., in the study of homotopy types~\cite{AdamaszekAdamsGasparovic2020}.

\section{Background}
\label{sec:background}

In this section, we first recap the theory of simplicial complexes and collapses. 
For this, we use the manuscripts by Forman~\cite{forman2002user} and Boissonnat et al.~\cite{boissonnat_et_al:LIPIcs:2018:9530}.
For a detailed introduction to these topics, we refer to the publications by Whitehead~\cite{Whitehead:1938tp}, Milnor~\cite{milnor1966whitehead}, and Cohens~\cite{cohen2012course}.
Besides the topological aspects, we provide a summary of the LB Jacobi set computation by Klötzl et al.~\cite{Kloetzl:2022:LocalBilinearJS}, which is the mathematical and algorithmic foundation of this paper.

\subsection{Simplicial Complex and Collapse}

To derive our method, we need the theory of simplicial collapse, strong collapse, and edge contraction.
Before we explain these topics, we recap necessary topological definitions such as simplicial complex, simplicial cone, and the nerve of a simplicial complex.
For a more detailed introduction, we refer to the book by Hatcher~\cite{Hatcher2002}.

\paragraph{Simplicial complex}

Given a non-empty finite set $X$, a \emph{simplicial complex} $K$ is given by a collection of subsets of $X$ such that for every subset $L\subset K$, all the subsets of $L$ are contained in $K$. An element $\sigma \in K$ with the cardinality of $k+1$ is denoted as \emph{$k$-simplex}. 
If $\sigma \subseteq \sigma'$, $\sigma$ is called a \emph{face} of $\sigma'$ and $\sigma'$ a \emph{coface} of $\sigma$.
The notion of a \emph{maximal} simplex is given, if it is not a face of any other simplex in $K$.
A sub-collection $L$ of $K$ is a simplicial complex called \emph{subcomplex}.
\emph{Simplicial maps} are defined as maps between two simplicial complexes $\phi \colon K\to L$ and induced by vertex-to-vertex maps $h\colon V(K) \to V(L)$ if the images of the vertices of every simplex in $K$ span a simplex in $L$.

For a simplex $\sigma$ contained in a simplicial complex $K$, the \emph{closed star} $\star_K(\sigma)$ is defined as a subcomplex of $K$ with $\star_K(\sigma):=\{ \tau \in K \mid \tau \cup \sigma \in K \}$ and the link $\link_K(\sigma):=\{ \tau \in \star_K(\sigma) \mid \tau \cap \sigma = \emptyset \}$ is the set of simplices in $\star_K(\sigma)$ that do not intersect with $\sigma$.
The \emph{join} of simplicial complexes is defined as the disjoint union of the two spaces, where every point of one space is attached by line segments to every point of the other space.
In this context, the join of a vertex with a simplicial complex is called \emph{simplicial cone}. For a given simplicial complex $L$ and a vertex $a\notin L$ the simplicial cone $aL$ is defined as $aL:=\{  a, \tau \mid \tau \in L \text{ or } \tau = \sigma \cup a \text{ for } \sigma \in L \}$, where $a,\tau$ represents a simplex.

An important construction is the \emph{nerve} of a simplicial complex $K$, $\mathcal{N}(K)$.
For a simplicial complex $K$, the vertices of $\mathcal{N}(K)$ are given by the maximal simplices of $K$ and the simplices of $\mathcal{N}(K)$ by their non-empty intersection. 
The nerve can be defined iteratively for $j\ge 2$ as $\mathcal{N}^j(K)=\mathcal{N}(\mathcal{N}^{j-1}(K))$ with $\mathcal{N}^1(K):= \mathcal{N}(K)$.
An important property of nerves of simplicial complexes is their connection to the strong collapse, which will be explained later on.

\paragraph{Simplicial collapse}
\begin{figure}[!t]
    \centering
    \includegraphics[width=\columnwidth]{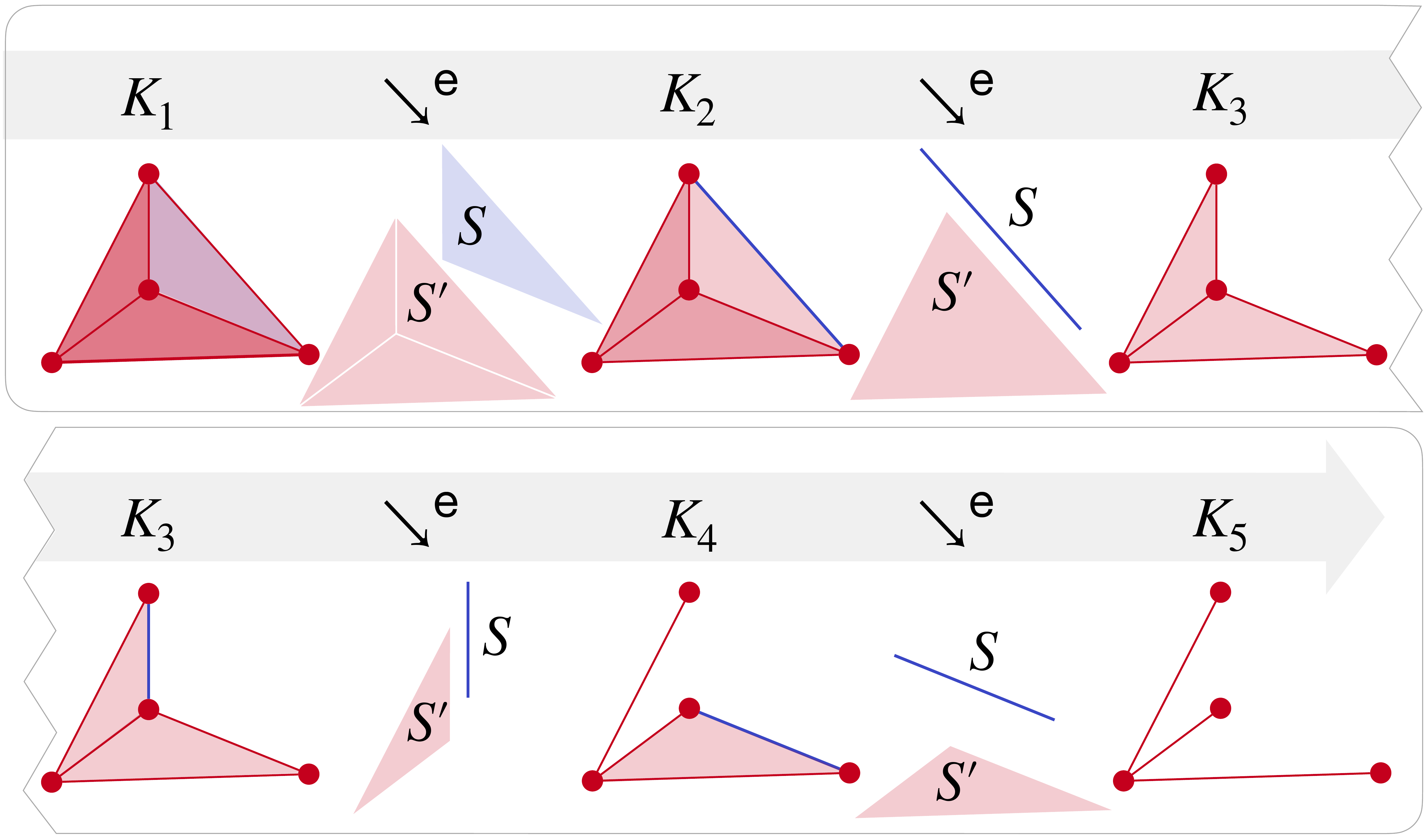}
    \caption{Simplicial collapse of a tetrahedral simplicial 3-complex to a graph (1-complex). 
    The procedure uses four elementary collapses.}
	\label{fig:el_coll}
\end{figure}

Let $K$ be a simplicial complex and $L$ a subcomplex $L\subset K$ of $K$. 
The simplicial complex $K$ collapses to $L$ with an elementary simplicial collapse if there are only two simplices $S$, $S' \in K$ with $S\cap L = \emptyset$ and $S'\cap L = \emptyset$, where $S$ is a \emph{free face} of $S'$, i.e., $S'$ is the unique simplex of $K$ with $S$ as a face.
The elementary simplicial collapse of $K$ to $L$ is denoted as $K \searrow^{\text{e}} L$. 
In general, a simplicial complex $K$ simplicially collapses to $L$, $K \searrow L$, or $L$ \emph{expands} to $K$ if there exists a sequence of finite subcomplexes $K_1, \ldots, K_n$ with $K=K_1$ and $K_n = L$ such that $K_i \searrow^{\text{e}} K_{i+1}$ for all $i\in \{1, \ldots, n-1\}$. 
It can be proven that $K$ collapses to $L$ if and only if the two simplicial complexes are simple homotopy-equivalent~\cite{whitehead1950simple}.

To illustrate the simplicial collapse, Figure~\ref{fig:el_coll} shows a tetrahedral simplicial 3-complex that is simplicially collapsed via four elementary collapses to a graph (1-complex). 
The elementary collapses are always characterized by two simplices that are ``deleted'' from the complex K. In the first elementary collapse, the respective simplices $S'$ and $S$ are the 3-cell and one of the 2-cell faces.

\paragraph{Strong collapse}
\begin{figure}[!t]
    \centering
    \includegraphics[width=\columnwidth]{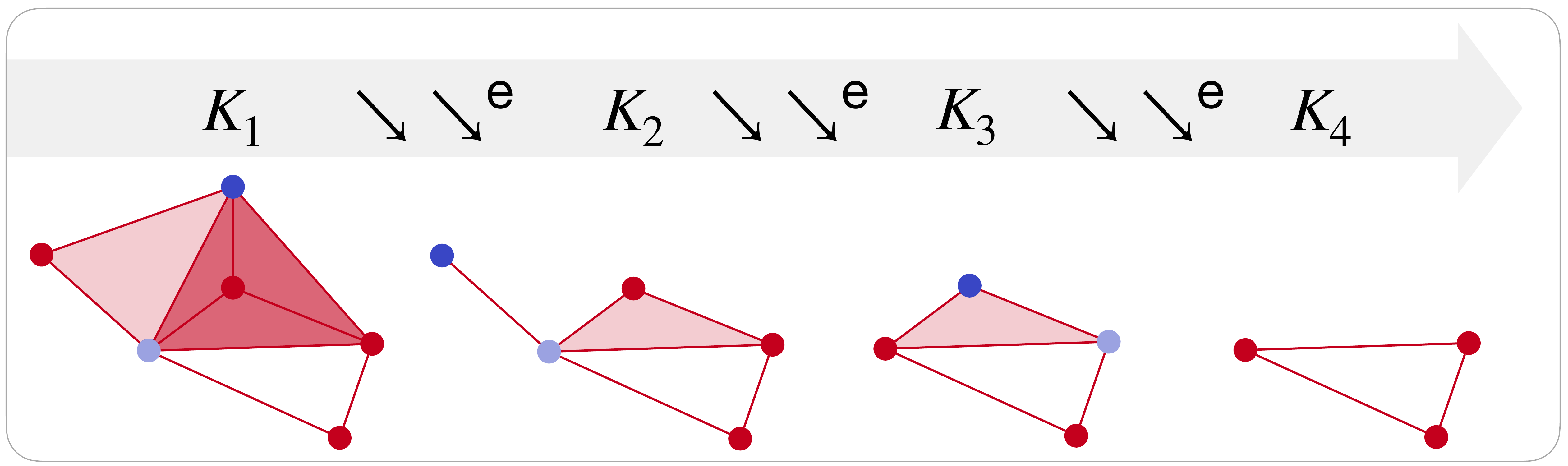}
    \caption{Strong collapse of a simplicial complex. The dark blue vertices are dominated by the light blue vertices, respectively.}
	\label{fig:strong_collapse}
\end{figure}

For the definition of a strong collapse, we need the notion of dominated vertices. 
Given a simplicial complex $K$, a vertex $\vec{a}\in K$ is a \emph{dominated vertex} if the link $\link_K(\vec{a})$ is a simplicial cone. 
This means that a vertex $\vec{a}$ is dominated if there exists a vertex $\vec{a}'$, $\vec{a}\neq \vec{a}'$, and a subcomplex $L\in K$ such that $\link_K(\vec{a})=\vec{a}'L$.
In this case, the vertex $\vec{a}$ is dominated by the vertex $\vec{a}'$. 
Another equivalent formulation is given by the notion of maximal simplices. If and only if all the maximal simplices of $K$ containing $\vec{a}$ also contain $\vec{a}'$, the vertex $\vec{a}\in K$ is dominated by $\vec{a}'\in K$~\cite{barmak2012strong}.

Analogously to the simplicial collapse, the strong collapse is again defined via elementary strong collapses. 
To perform an \emph{elementary strong collapse} one identifies and deletes a dominated vertex $\vec{a}$ from $K$: $K\searrow \searrow^{\text{e}} K \setminus \vec{a}$. 
An example of a strong collapse, derived from three elementary collapses, is given in Figure~\ref{fig:strong_collapse}. 
The \emph{strong collapse} of a simplicial complex $K$ to a subcomplex $L$, which we denote as $K \searrow \searrow L$, is thus defined as a series of elementary strong collapses.

The strong collapse of $K$ to $L$ leads to the same strong homotopy type of $K$ and $L$, and it is also well-known that if $K$ and $L$ have the same strong homotopy type, this implies the same simple homotopy type (but not vice versa).
Another important aspect is that for a simplicial complex $K$, there exists a subcomplex $L$ isomorphic to $\mathcal{N}^2(K)$, such that $K\searrow\searrow L$~\cite[Proposition 3.4]{barmak2012strong}.

\paragraph{Edge contraction}
\begin{figure}[!t]
    \centering
    \includegraphics[width=\columnwidth]{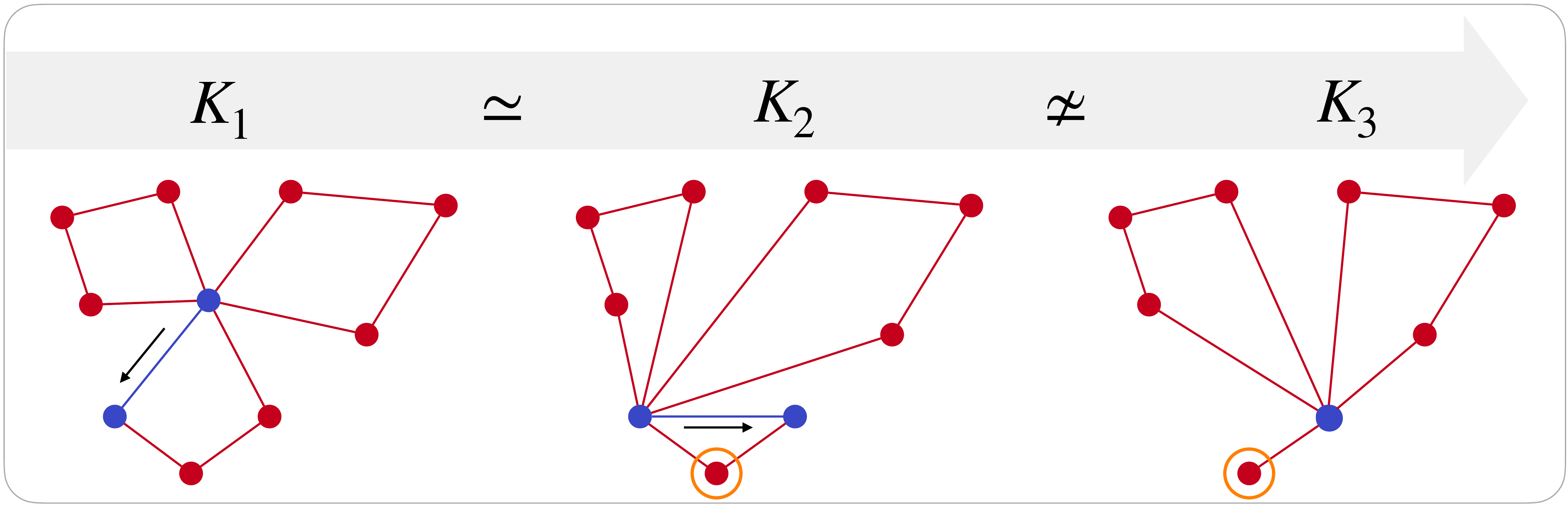}
    \caption{Edge contractions of different simplicial complexes, where contracted edges are marked in blue. 
    The left and middle complexes are homotopy-equivalent, while the middle and right complexes are not. The shared element of the two links contradicting the link condition is encircled in orange.}
	\label{fig:edge_contraction}
\end{figure}

For an edge $\vec{ab}$ of a simplicial complex $K$, an edge contraction is a simplicial map $\phi\colon K \to L$ induced by the vertex map $h_{\vec{ab}}\colon V(K)\to V(L)$ mapping $h(\vec{b}) = \vec{a}$ and everywhere else to identity.
The topology change for edge contractions is investigated by Dey et al.~\cite{dey1998topology} for different configurations.
One of the most important results for our context will be the following \emph{link condition} as a sufficient condition for the preservation of the topology for the edge contraction of certain simplicial complexes.
Given a 1-complex $K$, if $\link_K(\vec{a}) \cap \link_K(\vec{b})=\emptyset$, the edge contraction of $\vec{ab}$ from $K$ to $L$ is a homotopy equivalence.
This result is applied in the illustrative example given in Figure~\ref{fig:edge_contraction}.
In the first contraction ($K_1 \to K_2$), the link of the two vertices (marked in blue) is disjoint.
This leads to a homotopy-equivalent contraction of the edge to a new vertex, whereas the second contraction ($K_2 \to K_3$) is not homotopy-equivalent since the two vertices share an element in their link (encircled in orange).

\subsection{Local Bilinear Jacobi Set Computation}
\begin{figure}[!t]
    \centering
    \includegraphics[width=0.76\columnwidth]{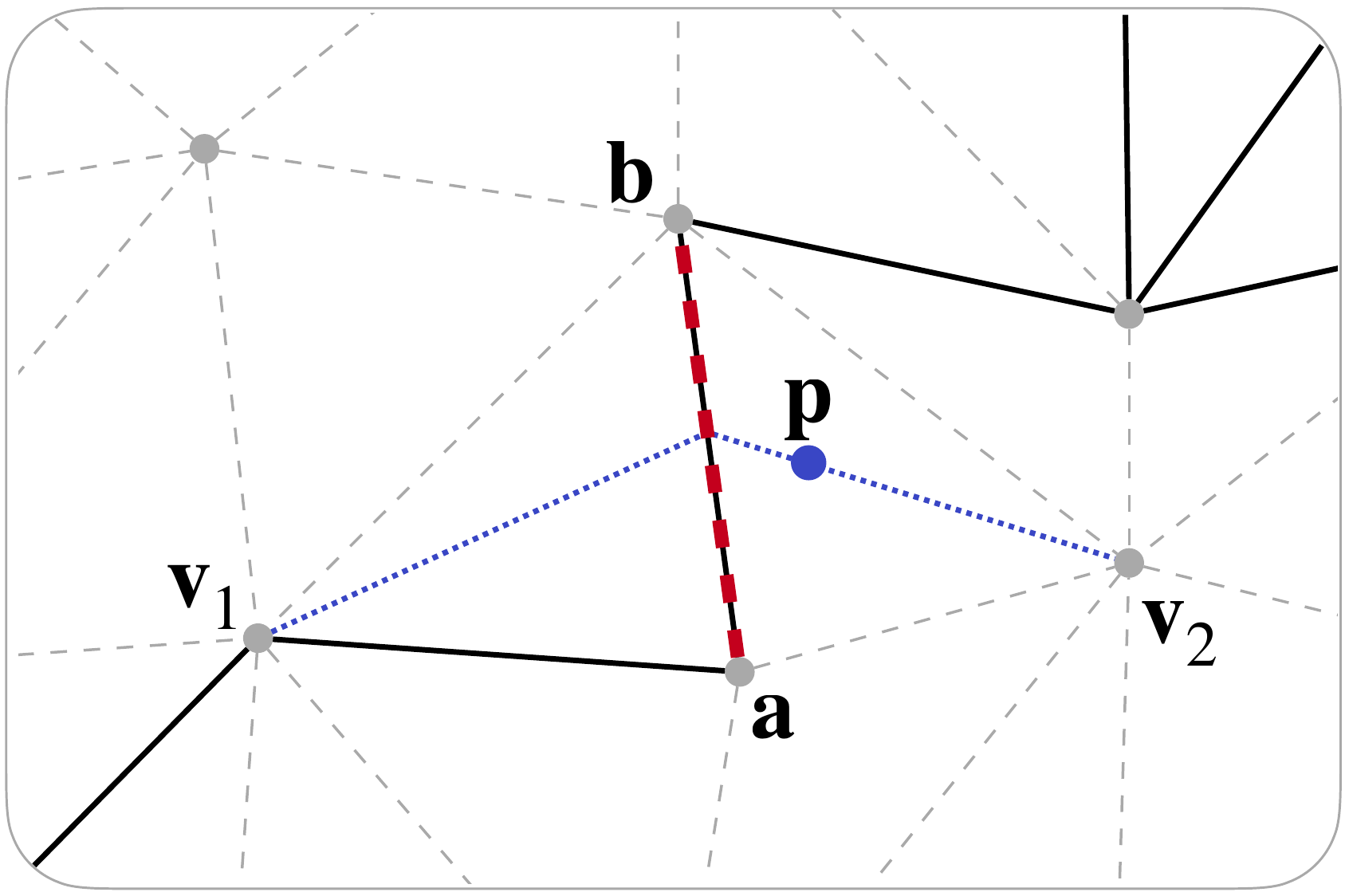}
    \caption{
    Sketch of the PL and LB computation of Jacobi sets.
    The illustration shows the configuration for the edge $\a\b$. While the PL method extracts edges (such as the edge $\a\b$), the LB method computes Jacobi set points (such as the point $p$).
    }
	\label{fig:pl_bl}
\end{figure}

\begin{algorithm}[!b]
    \setstretch{1.1}
	\DontPrintSemicolon
	\KwIn{Scalar fields $f$ and $g$, edges $E=(\vec{e}_i)$}
	\KwOut{$JE = (\vec{e}_i, \vec{p}_i)$: {List of critical edges $\vec{e}_i$ with corresponding Jacobi set points $\vec{p}_i$} }
	\Begin{
		\For{$\vec{e}_i \in E$ \label{alg:JSP_forEdges}}{
			$Find$ $\v,\vv \in \link(\vec{e}_i)$\;
			$Compute$ $\kappa^\lin_{\v}, \kappa^\lin_{\vv}$ \hfill // \cite[Eq.~6 and Eq.~7]{Kloetzl:2022:LocalBilinearJS} \\	\If{$\sig(\kappa^\lin_{\v}) \neq \sig(\kappa^\lin_{\vv})$ }
			{   
			    $Compute$ $\kappa^\bil_{\v}, \kappa^\bil_{\vv}$ \hfill // \cite[Eq.~10 and Eq.~11]{Kloetzl:2022:LocalBilinearJS} \\
				\If{$\sig(\kappa^\bil_{\v}) \neq \sig(\kappa^\bil_{\vv})$ }
			    {$\lambda = \kappa^\bil_{\v} / (\kappa^\bil_{\vv}-\kappa^\bil_{\v})$\;}
			\Else{$\lambda = \kappa^\lin_{\v} / (\kappa^\lin_{\vv}-\kappa^\lin_{\v})$ \;}
			$\m=\a+(\b-\a)/2$\;
			\If{$\lambda < 1/2$}{
			$\vec{p}_i = \v + 2\lambda (\m-\v)$\;
			}
			\Else{$\mathbf{p}_i = \m + (1 - 2\lambda) (\m - \vv)$\;}
				add $(\mathbf{e}_i,\mathbf{p}_i)$ to $JE$
		    }
    			
		}
	}
	\caption{Computation of Jacobi set points \cite[Alg. 1]{Kloetzl:2022:LocalBilinearJS} \label{alg:JSP}}
\end{algorithm}

\begin{figure*}[!t]
    \centering
    \includegraphics[width=0.9\textwidth]{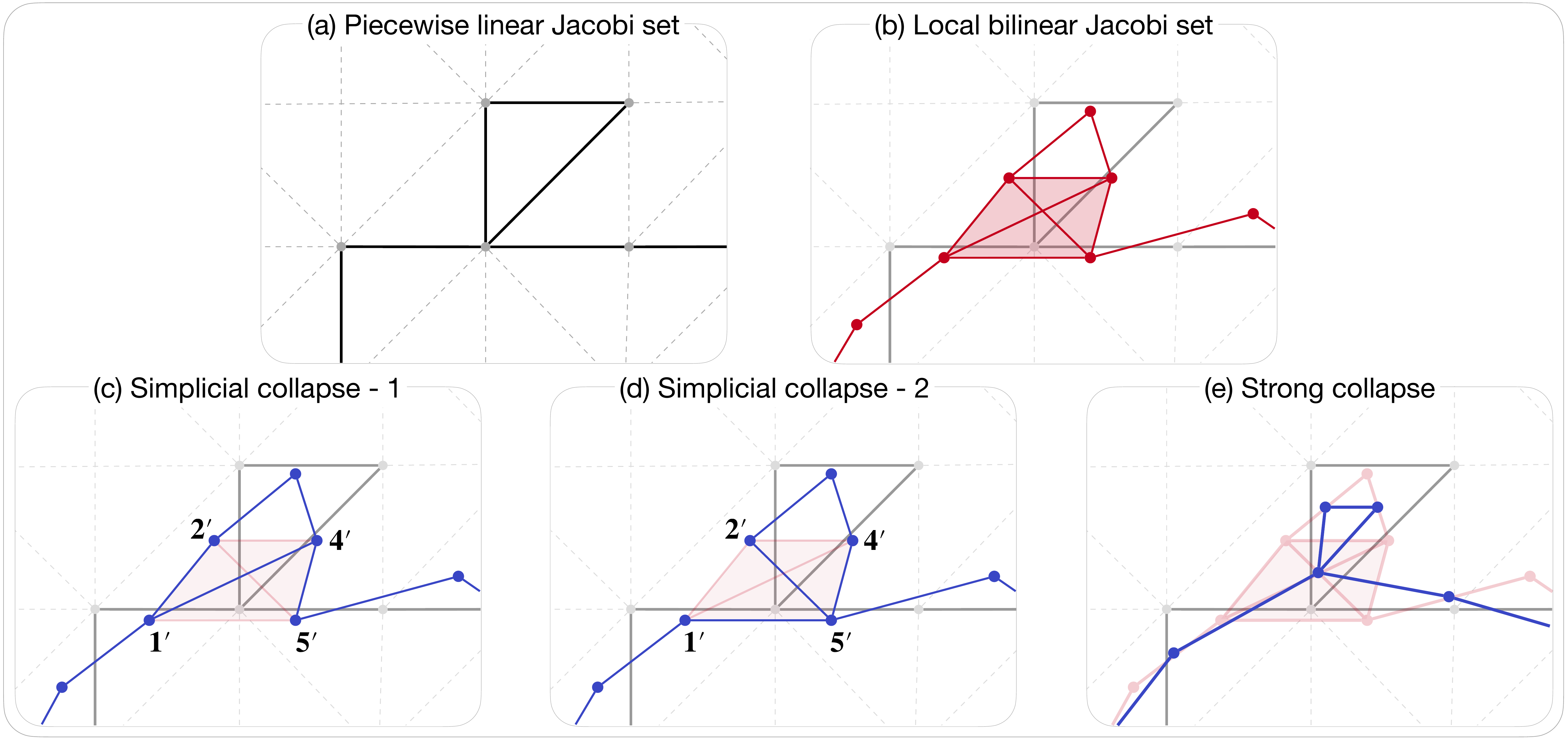}
    \caption{Overview of different Jacobi set representations (top row) and collapsing methods (bottom row). In (a,b), the PL (solid black) and LB (red) methods are shown for a given triangulation. The resulting 1-skeleton of the nerve computed by the LB method is illustrated in (b).
    Simplicial collapse is applied to this Jacobi set in two different ways (c,d). In (e), the strong collapse of the PL method via the geometrical configuration of the LB method is illustrated.}
	\label{fig:LocBil_Ex_Coll}
\end{figure*}

In this subsection, we explain the LB Jacobi set computation by Klötzl et al.~\cite{Kloetzl:2022:LocalBilinearJS}, which is the basis for the connectivity method in this paper.
To do this, we start with a description of the PL computation of Jacobi sets by Edelsbrunner and Harer~\cite{EdelsbrunnerHarer2002}.
For a consistent description of both methods, Klötzl et al.~\cite{Kloetzl:2022:LocalBilinearJS} reformulate the Jacobi set $\Jspace(f,g)=\left\{ \vec{x}\in \Mspace \mid \nabla f(\vec{x}) \times \nabla g(\vec{x}) = 0 \right\}$ for smooth functions $f,g \colon \Mspace \to \Rspace$ defined on a subset $\Mspace \subset \Rspace^2$ into the equivalent set 
\begin{equation}\label{eq:JS_kappa}
\Jspace(f,g)=\left\{ \vec{x}\in \Mspace \mid \kappa_\x(f,g) = 0 \right\},
\end{equation}
where $\kappa_\x(f,g) := \partial_{x} f(\x) \partial_{y} g(\x) - \partial_{y}f(\x) \partial_{x} g(\x)$ is the gradient alignment value.
This value characterizes the linear independence of $\nabla f(\x)$ and $\nabla g(\x)$ at the point $\x\in \Mspace$ and is used to formalize the LB and PL methods as described in the following. 

In general, both methods operate on a triangulation and consider edges individually to extract a 1-manifold that represents the Jacobi set.
In Figure~\ref{fig:pl_bl}, a sketched edge configuration is shown, where the edge $\a\b$ is examined with regard to its criticality.
This is the first part of the LB (or PL) method as shown in Algorithm~\ref{alg:JSP} (lines 2--5).  

Assuming $f$ and $g$ to be PL, i.e., both functions are linear in the triangles $\{\abv\}$ and $\{\avvb\}$, respectively, Klötzl et al.~\cite{Kloetzl:2022:LocalBilinearJS} proved that the (linear) gradient alignment value at the vertices $\v$ and $\vv$ is given by the following formulae
\begin{equation} \label{eq:loc_kappav1}
\kappa_{\v}^\lin(f,g)
=\frac{1}{A_{\abv}} \cdot (f_\b-f_\a)g_\v + (f_\a-f_\v)g_\b + (f_\v-f_\b)g_\a,
\end{equation}
\begin{equation} \label{eq:loc_kappav2}
\kappa_{\vv}^\lin(f,g)
=\frac{1}{A_{\avvb}} \cdot (f_\a-f_\b)g_\vv + (f_\vv-f_\a)g_\b + (f_\b-f_\vv)g_\a,
\end{equation}
where $A_{\vec{p}_1\vec{p}_2\vec{p}_3} = x_{\vec{p}_1}(y_{\vec{p}_2}-y_{\vec{p}_3})+ x_{\vec{p}_2}(y_{\vec{p}_3}- y_{\vec{p}_1}) + x_{\vec{p}_3}(y_{\vec{p}_1} - y_{\vec{p}_2})$ describes the area of the spanned parallelogram between the vertices $\vec{p}_1, \vec{p}_2$, and $\vec{p}_3$.
These gradient alignment values are used to identify the critical edges of the PL approach by Edelsbrunner and Harer:
\begin{equation}\label{eq:equivalence}
\ab \in \Jspace(f,g) \, \Leftrightarrow \, \sig\big(\kappa^\lin_{\v}(f,g)\big) \neq \sig\big(\kappa^\lin_{\vv}(f,g)\big)
\end{equation}
As the PL formulation results in piecewise constant gradient alignment fields $\kappa_\x(f,g)$, it is only possible to visually represent the Jacobi set via the identified critical edges.
For the resulting collection of line segments, the so-called \hyperref[lem:EDL]{Even Degree Lemma} holds~\cite{EdelsbrunnerHarer2002}.
\begin{EDL}
\label{lem:EDL}
	The degree of every vertex in $\Jspace(f,g)$ is even, i.e., the number of critical edges attached to a vertex is even.
\end{EDL}
\noindent Thus, the collection of line segments can be unfolded to a 1-manifold. More precisely, for every vertex with a degree larger than two, the attached edges can be unglued in a way that they do not cross.

Based on the PL approach by Edelsbrunner and Harer~\cite{EdelsbrunnerHarer2002}, the LB method by Klötzl et al.~\cite{Kloetzl:2022:LocalBilinearJS} enhances the representation geometrically while preserving the topology.
This is achieved by introducing the concept of Jacobi set points in the first stage (lines 6--16 in Algorithm~\ref{alg:JSP}) as well as a method to connect them in the second stage. 
Assuming bilinearly interpolated functions $f$ and $g$ on a quadrilateral cell that encloses a critical edge, they showed that the underlying gradient alignment field $\kappa_x$ as a function of $x$ is linear. 
Therefore, it is possible to compute a single point given by the zero of the gradient alignment field between the neighboring vertices of the critical edge.
Klötzl et al.~\cite{Kloetzl:2022:LocalBilinearJS} provided explicit formulae for $\kappa^{\text{bl}}_{\vec{v_1}}$ and $\kappa^{\text{bl}}_{\vec{v_2}}$ to compute the zero for each of the critical cells.
This point is called Jacobi set point and is assigned to the critical edge (in principle, it replaces the critical edge).
We refer to Algorithm~\ref{alg:JSP} for the computation of the Jacobi set points (the algorithm is identical to Algorithm~1 in Klötzl et al.~\cite{Kloetzl:2022:LocalBilinearJS}).

The connection of the Jacobi set points can be done in a canonical way by using the connectivity of the underlying PL method.
An illustration of these steps is provided in Figure~\ref{fig:LocBil_Ex_Coll} (top).
This technique solves the problem of zig-zag patterns but introduces clutter and crossings of line segments.
Klötzl et al.~\cite{Kloetzl:2022:LocalBilinearJS} showed that the homotopy is still equivalent to the PL approach because the connectivity coincides with the topological 1-skeleton of the nerve complex. 
To be more precise, defining $\mathcal{U}$ as the union of the closure of each PL Jacobi set edge, the nerve $\mathcal{N}(\mathcal{U})$ coincides with the local bilinear representation and is, therefore, homotopy-equivalent to the piecewise linear representation~\cite[Appendix]{Kloetzl:2022:LocalBilinearJS}.

\section{Our Approach}
\label{sec:our-approach}

In this section, we present the proposed connectivity method starting with the formulation of desired design goals.
Then, different collapsing strategies are evaluated, which leads to the derivation of our method.
Finally, the associated algorithm is described.

\subsection{Design Goals}
The typical result of the LB method by Klötzl et al.~\cite{Kloetzl:2022:LocalBilinearJS} is demonstrated in Figure~\ref{fig:LocBil_Ex_Coll} (top).
Since the connectivity of the LB method is given by the nerve complex and, in this case, actually consists of a tetrahedron (marked by the red transparent area), the 1-skeleton representation ends up in crossings of line segments.
For this representation, the \hyperref[lem:EDL]{Even Degree Lemma} holds (for any vertex an even number of critical edges is connected to it), resulting in a 1-manifold (the 1-manifold property is achieved through the unfolding of edges).

Our goal is now to find a reduced connectivity, that is, a representation with less clutter due to the conceptual removal of higher-dimensional simplices. 
In addition to that primary goal, the connectivity should satisfy the following design goals as well:
\begin{itemize}
\itemsep0pt
    \item Preservation of topology (homotopy-equivalent representation)
    \item Preservation of the geometrical configuration (Jacobi set points and their positions)
    \item Upholding of the \hyperref[lem:EDL]{Even Degree Lemma}
    \item Computation in a deterministic way.
\end{itemize}

\subsection{Derivation}

We are now able to derive our method.
Starting with the connectivity of the LB method given in Figure~\ref{fig:LocBil_Ex_Coll} (b), we aim for a representation with the desired design goals.
To this end, we explore different collapsing strategies in the following to get rid of the crossings that arise from the higher-dimensional simplices.

\paragraph{Simplicial collapse of LB method}\label{par:simple}

To collapse the tetrahedron in Fig~\ref{fig:LocBil_Ex_Coll} (b) or any other higher-dimensional simplex contained in a simplicial complex, which we will denote as $K$ in the following, we have a closer look at its topological structure.
Due to the \hyperref[lem:EDL]{Even Degree Lemma}, $K$ consists only of odd-dimensional simplices.
Therefore, each of the higher-dimensional simplices can be collapsed individually because, by construction, they are only connected to other simplices via vertices.
This collapsing strategy leads to a 1-complex as illustrated in Figure~\ref{fig:LocBil_Ex_Coll} (c) or, in more detail via elementary simplicial collapses, in Figure~\ref{fig:el_coll}.
In principle, three edges of the topological 1-skeleton are removed for a tetrahedron.

While this approach seems promising with regard to topological and geometrical properties, it is not uniquely determined and may violate the \hyperref[lem:EDL]{Even Degree Lemma}.
This can be observed in Figure~\ref{fig:LocBil_Ex_Coll}~(c) for the vertices $\vec{1}'$ and $\vec{4}'$.
However, this issue can be solved as the following theorem shows.

\begin{theorem}
Given a simplicial complex $K$ containing only odd-dimensional simplices that are connected via vertices. For each $n$-simplex there are $n-1$ configurations to simplicially collapse $K \to L$ such that a $1$-complex is obtained, the \hyperref[lem:EDL]{Even Degree Lemma} holds for $L$, and the vertices of $K$ are preserved. 
\end{theorem}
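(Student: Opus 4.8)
The plan is to reduce the global statement to a purely local, per-simplex analysis and then to a counting problem about spanning trees. First I would use the hypothesis that the maximal simplices of $K$ meet only in vertices: the faces of dimension $\ge 1$ interior to one maximal $n$-simplex $\sigma$ are disjoint from those of every other maximal simplex, so an elementary collapse $\searrow^{\text{e}}$ that removes a free face inside $\sigma$ never interacts with a different maximal simplex. Consequently the collapse $K \searrow L$ splits into independent collapses of the individual maximal simplices, and $L$ is the union of the collapsed pieces glued along shared vertices. It therefore suffices to describe, for a single solid $n$-simplex $\sigma$ with $n$ odd, all collapses to a $1$-complex that retain every vertex and keep degrees compatible with the \hyperref[lem:EDL]{Even Degree Lemma}.

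Next I would identify the admissible collapse targets of one simplex. Since $\sigma$ is contractible and simplicial collapse preserves the (simple) homotopy type, any $1$-dimensional target is contractible, hence a tree; requiring that all $n+1$ vertices survive forces it to be a spanning tree $T$ of the $1$-skeleton $K_{n+1}$ of $\sigma$. I would then verify, as in the tetrahedral example of Figure~\ref{fig:el_coll}, that every spanning tree of $K_{n+1}$ is in fact reachable by a sequence of elementary collapses (peel off the top cell through a free facet, then repeatedly remove a now-free edge together with its unique remaining coface), so the set of targets is exactly the set of spanning trees of $K_{n+1}$.

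The heart of the argument is a parity lemma. In $K_{n+1}$ every vertex has degree $n$, which is odd; collapsing $\sigma$ to a tree $T$ changes only the degrees of the vertices of $\sigma$ (the removed edges lie inside $\sigma$), replacing $n$ by $\deg_T(p)$ at each vertex $p$. Because every vertex of $K$ has even total degree and the contributions of the other maximal simplices sharing $p$ are untouched, the \hyperref[lem:EDL]{Even Degree Lemma} survives in $L$ if and only if the within-$\sigma$ parity at each $p$ is preserved, that is, if and only if $\deg_T(p)$ is odd for every vertex $p$ of $\sigma$; equivalently, the removed edges form an even subgraph of $K_{n+1}$. This is precisely the condition that makes the construction local and independent of the neighbouring simplices, so the global \hyperref[lem:EDL]{Even Degree Lemma} then follows by summing the two odd contributions at each shared vertex.

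Finally I would count the admissible trees. Among the spanning trees of $K_{n+1}$ the condition ``all degrees odd,'' together with the determinism and geometry-preservation design goals (which fix the roles of the extremal Jacobi set points), should single out a canonical family governed by one remaining degree of freedom, yielding the asserted $n-1$ configurations; I would make this precise by classifying the admissible degree sequences and the associated non-crossing trees on the ordered Jacobi set points. I expect this last step to be the main obstacle: the all-odd-degree condition alone admits many more spanning trees (for the tetrahedron it already admits all four stars), so the number $n-1$ must come from the extra requirement that the collapse be geometrically faithful and computed deterministically. Pinning \emph{that} down rigorously, rather than the homotopy-theoretic part, is where the real work lies.
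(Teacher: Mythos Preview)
Your localization step and the parity reformulation---that the collapse target inside each maximal simplex $\sigma$ must be a spanning tree of its $1$-skeleton in which every vertex has odd degree---are correct and are essentially what the paper relies on. The divergence is in how the admissible trees are produced and counted.

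The paper does not attempt to classify all odd-degree spanning trees. It simply exhibits one explicit family: for a maximal $n$-simplex $S$, pick a facet $S'$ (an $(n-1)$-face) and collapse $S$ so that precisely the edges lying in $S'$ are removed. What remains of the $1$-skeleton of $S$ is the star centred at the vertex opposite $S'$; every vertex of $S'$ retains exactly one edge and the apex retains all $n$, both odd, so your parity criterion is met and the \hyperref[lem:EDL]{Even Degree Lemma} follows immediately. The stated count of configurations is then just taken to be the number of admissible choices of $S'$---the paper asserts this is $n-1$ without further argument---and the theorem is an existence statement, not a complete enumeration.

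Your final step is therefore aimed at the wrong target. Invoking the determinism and geometry-preservation design goals to cut the set of odd-degree spanning trees down to $n-1$ is not part of this theorem at all; those constraints enter only in the subsequent paragraph, where the authors \emph{reject} the simplicial-collapse route precisely because it is not uniquely determined, and switch to the barycentric construction instead. The ``real work'' you anticipate in your last paragraph does not exist here: the proof is the one-line star construction above, and nothing more.
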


\begin{proof}
First, we want to point out that each of the vertices in an odd-dimensional $n$-simplex has an odd number of edges inside the $n$-simplex attached to it.
For each higher-dimensional $n$-simplex $S$, ($n>1$), we collapse the simplex iteratively such that all edges of one of the $(n-1)$ contained ($n-1$)-simplices are collapsed (leading to $(n-1)$ possible configurations). 
This ($n-1$)-simplex, which we denote as $S'$, can be chosen freely.
The procedure is illustrated in Figure~\ref{fig:LocBil_Ex_Coll}~(d), where the 3-dimensional simplex $S=\{\vec{1}',\vec{2}', \vec{4}', \vec{5}'\}$ (transparent red tetrahedron) is collapsed via simplicial collapses of the edges contained in the face $S'=\{\vec{1}',\vec{2}', \vec{4}'\}$.

As a result, for each vertex of the ($n-1$)-simplex $S'$ (a vertex of $S'$ has $n$ attached edges in $S$), ($n-1$) edges are collapsed so that only one attached edge remains.
For the vertex that is not contained in $S'$ ($\vec{5}'$ in Figure~\ref{fig:LocBil_Ex_Coll}~(d)), the attached edges do not change.
Therefore, each of the vertices has an odd number of attached edges in $S$ (since 1 and $n$ are odd).
Taking the surrounding simplicial complex into account, where an odd number of edges are attached to each vertex of $S$, we can conclude that every vertex has an even degree.
The 1-complex results by construction through collapsing all edges of one $(n-1)$-simplex $S'$ (resulting in a total of $A(n,1)$, elementary collapses to collapse the simplex $S$ to a $1$-simplex, where $A(n,m):=\sum_{k=0}^{m+1}(-1)^k \binom{n+1}{k} (m+1-k)^n$ is the Eulerian number).
One of the resulting 1-complexes is illustrated in Figure~\ref{fig:LocBil_Ex_Coll}~(d).
\end{proof}

The theorem points out that there are multiple configurations such that the \hyperref[lem:EDL]{Even Degree Lemma} holds.
One of these configurations is shown in Figure~\ref{fig:LocBil_Ex_Coll} (d).
Even though this approach would lead to a topologically and geometrically preserving representation, it is still not uniquely determined, which makes the topological study of scalar fields via Jacobi sets difficult due to ambiguous solutions.

\paragraph{Strong collapse of LB method}

Another collapsing strategy for the tetrahedron in Figure~\ref{fig:LocBil_Ex_Coll} (b) or any other simplicial complex $K$ is the strong collapse.
As pointed out in the background section, an efficient method to perform the strong collapse is to apply the nerve complex twice.
Since we already know that the LB method by Klötzl et al.~\cite{Kloetzl:2022:LocalBilinearJS} coincides with the nerve complex of the PL method by Edelsbrunner and Harer~\cite{EdelsbrunnerHarer2002}, we obtain the desired homotopy-equivalent representation by taking the nerve once again.

For the nerve construction of the LB representation (i.e., the strong collapse of the PL approach), the geometry cannot be preserved as all the simplices are mapped to new vertices.
The heuristic approach is to take the barycenter of the respective simplices as illustrated in Figure~\ref{fig:LocBil_Ex_Coll} (e).

To be clear, the notion of strong collapse refers to the strong collapse of the PL Jacobi set. 
Although this approach preserves the topology (since the PL and LB representations are homotopy-equivalent), satisfies the \hyperref[lem:EDL]{Even Degree Lemma}, and is deterministic, it does not preserve the geometry given by the Jacobi set points.
Ignoring the location of the Jacobi set points leads to a representation that takes no advantage of the LB method.
In fact, the barycentric relocation of all Jacobi set points results in an undesired smoothing that may provide misleading information about the Jacobi set.

\begin{figure}[!t]
    \centering
    \includegraphics[width=\columnwidth]{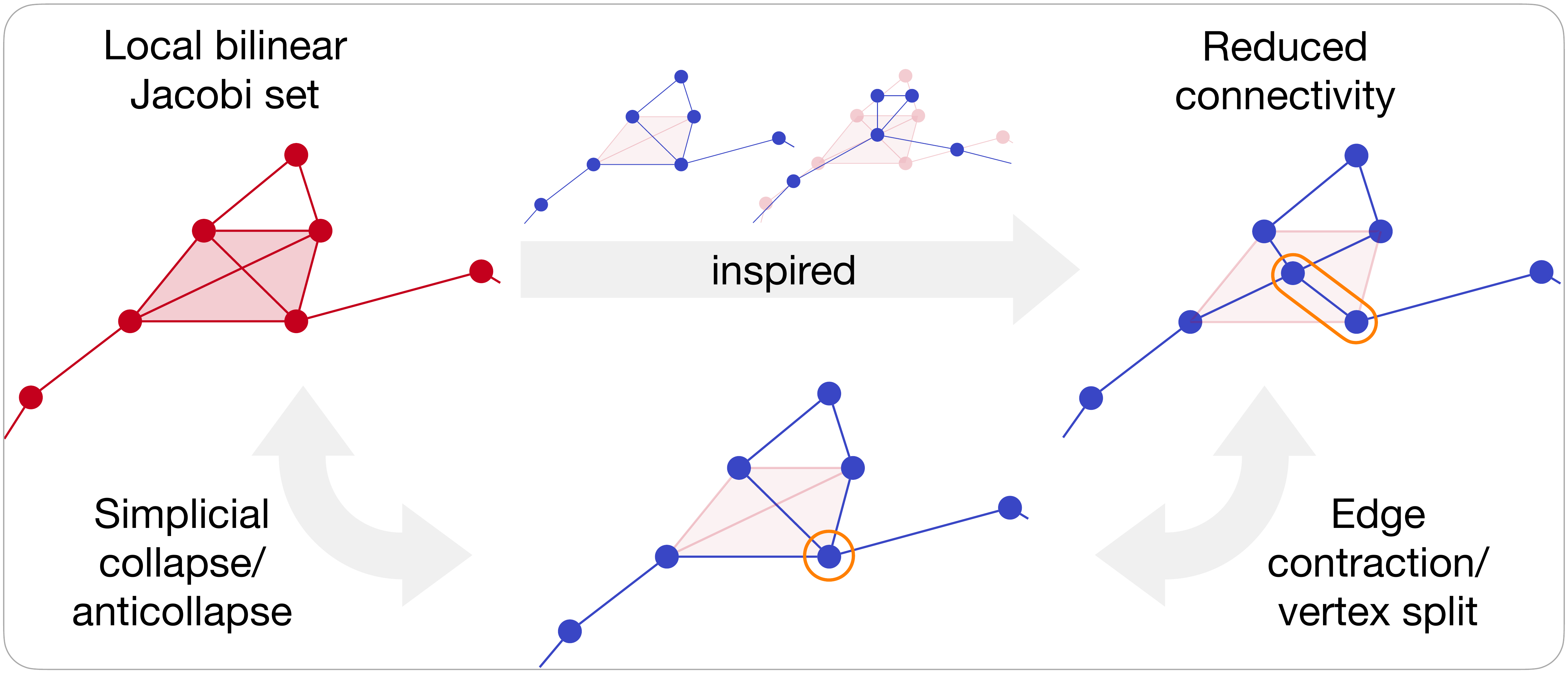}
    \caption{Overview of our method to obtain the reduced connectivity construction. 
    Inspired by the simplicial collapse and strong collapse, the proposed homotopy-equivalent representation is derived. The homotopy-equivalent edge contraction is encircled in orange.}
	\label{fig:collapse_overview}
\end{figure}

\begin{algorithm}[!b]
    \setstretch{1.1}
	\DontPrintSemicolon
	\KwIn{$JE = (\vec{e}_i, \vec{p}_i)$: {List of critical edges $\vec{e}_i$ with corresponding Jacobi set points $\vec{p}_i$}}
	\KwOut{$JS^{\mathcal{C}} = \{l_k = (\vec{p}_{k_i}, \vec{p}_{k_j})\}$: Jacobi set lines}
	\Begin{
	    Initialization list $L$ \hfill // $L$ maps vertices to critical edges \\
	    \For{$(\vec{v}_{i_1},\vec{v}_{i_2})=\vec{e}_i \in JE$ }
	    {
		    $L(\vec{v}_{i_1}) = L(\vec{v}_{i_1}) \cup \{\vec{e}_i\}$\\
		    $L(\vec{v}_{i_2}) = L(\vec{v}_{i_2}) \cup \{\vec{e}_i\}$\\
		}
	    \For{vertex $\vec{v} \in L$}
	    {
	        $L(v) = \{\vec{e}_{v_1},\dots,\vec{e}_{v_d}\}$ \\
    		\If{$d==2$}
    		{
    		add line segment $l=(\vec{p}_{v_1}, \vec{p}_{v_2})$ to $JS^{\mathcal{C}}$ 
    		}
    		\Else
    		{
    		$\vec{p}_{m}^{\mathcal{C}} = \frac{1}{d} \sum_{k=1}^d \vec{p}_{v_k}$\\
    		\For{$j=1,\dots,d$}
    		{
		    add line segment $l=(\vec{p}_{v_j},\vec{p}_{m}^{\mathcal{C}})$ to $JS^{\mathcal{C}}$ \\
		    }
		    }
		}
	}
	\caption{Computation of Reduced Connectivity.\label{alg:hybrid}}
\end{algorithm}

\paragraph{Reduced Connectivity of LB method}

After the evaluation of the two collapsing strategies---simplicial collapse and strong collapse---we can deduce our new connectivity method as a combination of both.
An overview of our approach is given in Figure~\ref{fig:collapse_overview}.
Inspired by the strong collapse, we propose using a barycentric representation for the higher-dimensional simplex.
Nevertheless, inspired by the simplicial collapse, we keep the Jacobi set points and connect the vertices of the higher-simplicial complex to the barycenter.
This leads to a representation with reduced connectivity that inherits the advantages of both collapsing strategies 
while avoiding their issues.

\begin{figure*}[!t]
    \centering
    \includegraphics[width=\textwidth]{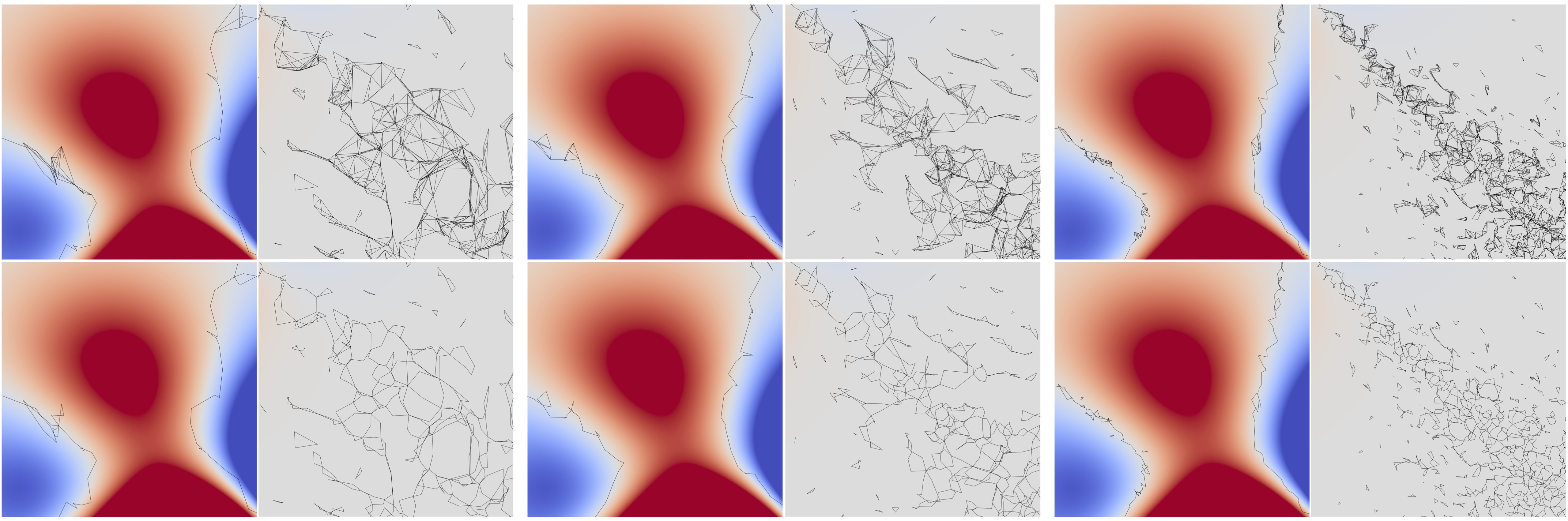}
    \caption{
    Comparison of the original connectivity construction (top row) and our reduced connectivity (bottom row) for the following different resolutions: $60 \times 60$, $80 \times 80$, and $160 \times 160$.
    The dataset, zoomed-in areas, and the color-coding are the same as in Figure~\ref{fig:teaser}.
    }
	\label{fig:analytic}
\end{figure*}

In fact, the representation preserves the geometrical configuration as the Jacobi set points from the local bilinear computation are used.
The barycenter is well justified geometrically as the Jacobi set points around it are zeros of the gradient alignment field (for even higher-dimensional simplices, there are many more Jacobi set points around it).
Also, the representation preserves topology since a homotopy-equivalent simplicial collapse (anticollapse) and edge contraction (vertex split) can be applied to show homotopy equivalence (see Figure~\ref{fig:collapse_overview}).
In addition, the \hyperref[lem:EDL]{Even Degree Lemma} holds because all connections between vertices affected by the higher-dimensional simplex are resolved and, instead, one single connection to the barycenter is established (see Figure~\ref{fig:collapse_overview}).
Finally, the procedure is deterministic by construction. Therefore, all of our design goals are met.
In the next subsection, we formulate an appropriate algorithm to compute the reduced connectivity.

\begin{table}[!b] 
  \centering
\caption{\label{tbl:edge_count}
Number of resulting edges of the non-reduced connectivity method and our reduced connectivity construction.}
    \begin{tabular}{@{}lccc@{}} \toprule
     & \makecell{Non-reduced \\ connectivity} 
     & \makecell{Reduced \\ connectivity} \\
    \midrule
    Analytic (80x80) & 2,014 & 1,424 \\
    Analytic + noise & 6,376 & 4,807\\
    Kármán Vortex & 7,360 & 5,993\\
    Hurricane Isabel & 370,026 & 272,031\\
    Droplet impact & 21,729 & 16,049 \\
    \bottomrule
    \end{tabular}
\end{table}

\subsection{Algorithm}

For the formulation of the algorithm, we recap that Algorithm~\ref{alg:JSP} produces a list of critical edges $\vec{e}_i$ with the corresponding Jacobi set points $\vec{p}_i$.
This list is the input for our algorithm, which is presented in Algorithm~\ref{alg:hybrid}.
According to the derivation of our reduced connectivity, the first step is to identify the higher-dimensional simplices.
These are characterized by vertices that have four or more critical edges connected to them. 
Thus, the first part of Algorithm~\ref{alg:hybrid} is to generate a list $L$ that maps vertices to critical edges (this list is a subset of the general vertex-to-edge list).
Algorithmically, it can be computed via a for loop over the critical edges (lines 2--5).

The next part of the algorithm is to iterate over each vertex that is in the list $L$ and to check how many critical edges are connected to the vertex (lines 6--7).
Due to the \hyperref[lem:EDL]{Even Degree Lemma}, there are either two connected critical edges connected to the vertex or a larger number that is even.
In the case of only two connected critical edges, no higher-dimensional simplex is involved in the representation and, hence, the usual connectivity, i.e., one line segment, can be used (lines 8--9).
If there are more than two critical edges connected to a vertex, then there is a higher-dimensional simplex that needs to be reduced.
According to our derivation, the barycenter of the higher-dimensional simplex is computed via the location of the vertices (line 11) and used for the connectivity, i.e., each vertex is connected to the barycenter via a line segment (lines 12--13). 

Algorithm~\ref{alg:hybrid} results in a collection of line segments that approximates the Jacobi set.
Compared to the representation by Klötzl et al.~\cite{Kloetzl:2022:LocalBilinearJS}, our reduced connectivity produces fewer edges and a clearer representation while still preserving the topology and geometry.
The number of reduced edges can be mathematically formalized as
\begin{equation*}
\sum_{j=1}^K  \binom{N_j}{2}-N_j,
\end{equation*}
where $N_j$ is the dimension of each of the $K$ higher-dimensional simplices.
If there is no higher-dimensional simplex in the representation, then no edges are removed (in this case, there is also no visual clutter).
Otherwise, the number of reduced edges for each higher-dimensional simplex $j$ is given as follows: the binomial $\binom{N_j}{2}$ refers to the number of edges of the $(N_j-1)$-dimensional simplex in the original connectivity method and is subtracted by $N_j$, the number of edges which result from Algorithm~\ref{alg:JSP_forEdges} by connecting each of the $N_j$ vertices to the barycenter (see Figure~\ref{fig:collapse_overview}). 

Another important aspect of Algorithm~\ref{alg:JSP_forEdges} is the low computational overhead compared to the original LB method.
Our connectivity method iterates over all critical edges $\vec{e}_i \in JE$ in the first for-loop and over vertices that are connected to critical edges in the second for-loop.
Since the quantities inside of the loops are computed locally (in our experiments, due to a reasonable triangulation, the value $d$ was typically smaller than 10), the time complexity is linear with respect to the number of critical edges $\#\vec{e}$, i.e., it is given by $\mathcal{O}(\# \vec{e})$.
This is also true for the non-reduced connectivity method by Klötzl et al.~\cite{Kloetzl:2022:LocalBilinearJS}.
Moreover, since the non-reduced connectivity method simply adds line segments between all critical edges that share a vertex (i.e., there is a pairwise connection between the edges $e_{v_1},\dots,e_{v_d}$ in line 7 in Algorithm~\ref{alg:JSP_forEdges}), the only overhead of our connectivity method is introduced through the computation of the barycenter $\vec{p}_m^{\mathcal{C}}$ in line 11.

\begin{table}[!b] 
  \centering
\caption{\label{tbl:compute_times} Computation times (in ms) of the Jacobi set point computation (Alg.~1~\cite{Kloetzl:2022:LocalBilinearJS}), the non-reduced connectivity method (Alg.~2~\cite{Kloetzl:2022:LocalBilinearJS}), and the reduced connectivity construction (Alg.~\ref{alg:JSP_forEdges}).}
    \begin{tabular}{@{}lccc@{}} \toprule
    & \makecell{Comp. of \\ JS points}
    & \makecell{Non-reduced \\ connectivity}
    & \makecell{Reduced \\ connectivity}
    \\
    \midrule
Analytic (80x80) & 10.69 & 0.59 & 1.33\\
Analytic + noise &  17.56  & 2.99 & 8.27 \\
Kármán Vortex & 69.92	 & 2.34 & 2.51 \\
Hurricane Isabel & 473.01  & 75.08 & 162.95 \\
Droplet impact & 1,829.21  &16.34& 42.99\\
    \bottomrule
    \end{tabular}
\end{table}

\section{Evaluation}

\begin{figure*}[!t]
    \centering
    \includegraphics[width=\textwidth]{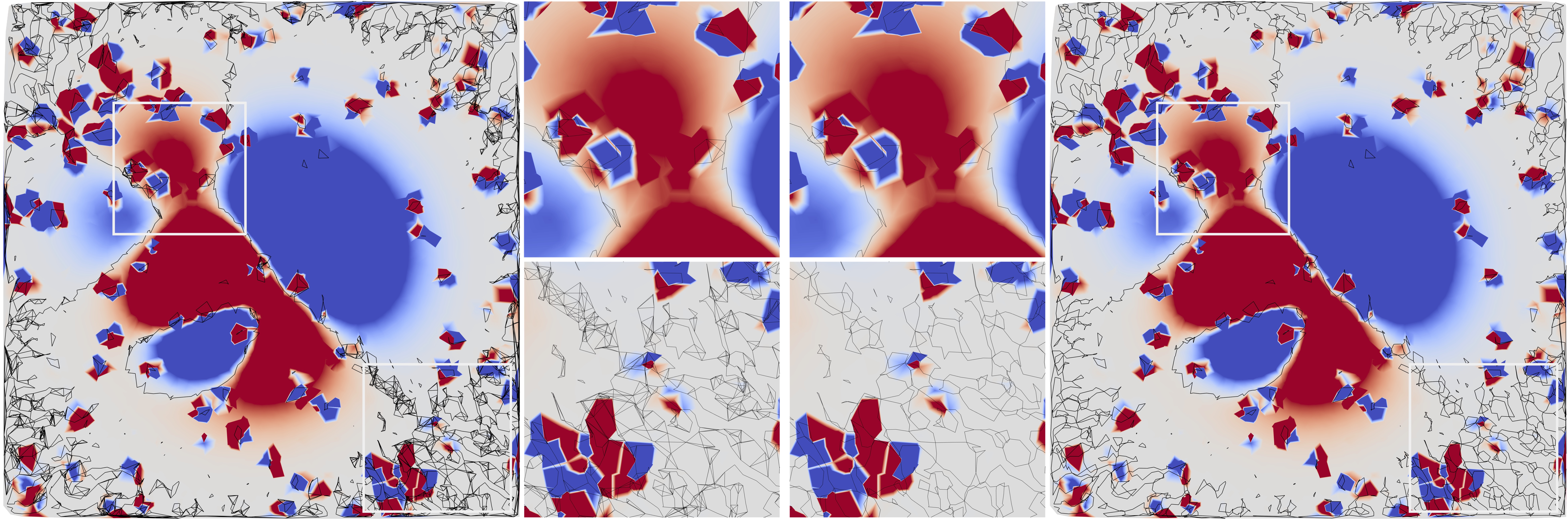}
    \caption{
    Comparison of the original connectivity construction (left) and our reduced connectivity (right) for the analytic dataset with induced uncertainty and the resolution $80 \times 80$.
    The uncertainty is modeled with a salt-and-paper noise and a gaussian noise leading to a distortion of the Jacobi set.
    In the middle, zoomed-in areas are shown that belong to the white-marked areas in the respective outer visualizations.
    }
	\label{fig:analytic_noise}
\end{figure*}

In this section, our new reduced connectivity for the LB computation of Jacobi sets is compared to the visualization by Klötzl et al.~\cite{Kloetzl:2022:LocalBilinearJS}.
We analyze both variants with respect to the following datasets: an analytic example with different resolutions and induced uncertainty as well as a numerically simulated droplet impact on a fluid film. 
In contrast to the analytical dataset, the droplet dataset consists of multiple physical phenomena that highlight the advantages of our method.
For each scenario, the quantitative reduction of edges, as well as the computation time of our connectivity method (Algorithm~\ref{alg:hybrid}), is shown in Table~\ref{tbl:edge_count} and Table~\ref{tbl:compute_times}, respectively.
As a reference, the quantitative analysis is also performed for the von Kármán Vortex Street~\cite{Guenther17} and the Hurricane Isabel\footnote{Hurricane Isabel data produced by the Weather Research and Forecast (WRF) model, courtesy of NCAR and the U.S. National Science Foundation (NSF) (\url{http://www.vets.ucar.edu/vg/isabeldata/})} dataset, which is further investigated in \cite{Kloetzl:2022:LocalBilinearJS}.
The computation was done with MATLAB (R2022a) on a MacBook Pro with an Intel Dual-Core i5 CPU @3.1\,GHz and 8\,GB of RAM.

\subsection{Analytic Dataset}

The first dataset is artificial and consists of two scalar fields $f$ and $g$ that are characterized by bivariate normal distributions on the unit square. 
For a more detailed description and visualization of the scalar fields (as well as the resulting analytic Jacobi set $\mathbb{J}(f,g)$), we refer to Klötzl et al.~\cite{Kloetzl:2022:LocalBilinearJS}, who introduced this dataset.
In the first part of this subsection, we consider the dataset for different resolutions.
Afterward, uncertainty is induced into the dataset to make the extraction and visualization of Jacobi sets more difficult.

\paragraph{Different resolutions}

The dataset is shown for the resolution $40 \times 40$ in Figure~\ref{fig:teaser} and for the resolutions $60 \times 60$, $80 \times 80$, and $160 \times 160$ in Figure~\ref{fig:analytic}.
In Figure~\ref{fig:teaser}, the entire domain is visualized as well as two zoomed-in areas.
The same zoomed-in areas are shown in Figure~\ref{fig:analytic} with a higher resolution.

In all comparisons, we observe that our reduced connectivity successfully reduces the number of edges.
Thus, a less cluttered representation is achieved that shows the connections in a clearer way.
This fact can be particularly observed in the bottom right zoom areas in Figure~\ref{fig:analytic}, where many topological structures are identified. 

The other zoomed-in area in the middle of the unit square demonstrates that our connectivity method removes many redundant edges, i.e., edges that do not lie in the white area (zero level set of the gradient alignment field). These redundant edges arise from the 1-skeleton of the higher-dimensional simplices, which are collapsed by our method.
In Figure~\ref{fig:teaser}, this fact can be observed in the top right part of the first zoomed-in area, where the simplex collapses to an hourglass structure.
The triangle in the lower left part, though, is clearly a discretization artifact and the method does not simplify this, since the edges do not build a higher-dimensional simplex. 
Instead, it maintains the geometrical Jacobi set points and connectivity of the 1-complexes as intended.

Table~\ref{tbl:edge_count} confirms the observations as the number of edges are reduced significantly for the different resolutions.
Approximately $25\%$ of the edges (line segments) are removed in all scenarios due to our connectivity method.
The computation time for this is presented in Table~\ref{tbl:compute_times}.
It can be observed that Algorithm~\ref{alg:hybrid} does only produce little overhead compared to the LB method.

\paragraph{Induced uncertainty}
To modify the analytic dataset with a resolution of $80 \times 80$, we apply a salt-and-paper noise and a weak Gaussian noise to the two scalar fields $f$ and $g$, leading to a distortion of the Jacobi set $\mathbb{J}(f,g)$.
In Figure~\ref{fig:analytic_noise}, the conventional connectivity and our reduced connectivity are shown for the entire domain and the same zoomed-in areas as in Figure~\ref{fig:teaser} (or Fig~\ref{fig:analytic}).

In contrast to the non-modified dataset, we observe that more topological patterns are identified due to distortion, which makes the conventional representation even more cluttered.
Our reduced connectivity still facilitates the representation in a meaningful way and, in particular, has a clearer topological and geometrical identification.
These aspects can be observed in the top zoomed-in areas, where, e.g., loops are visualized in a lucid way.

Analogously to the non-modified dataset, our reduced connectivity method does not produce much overhead (see Table~\ref{tbl:compute_times}), although many more edges are identified.
In fact, Table~\ref{tbl:edge_count} states that the LB method identifies $6,376$ edges and the reduced connectivity configuration only produces $4,807$ edges.
This is a reduction of approximately $25 \%$.

\subsection{Droplet dataset}

\begin{figure}[!t]
    \centering
    \includegraphics[width=\columnwidth]{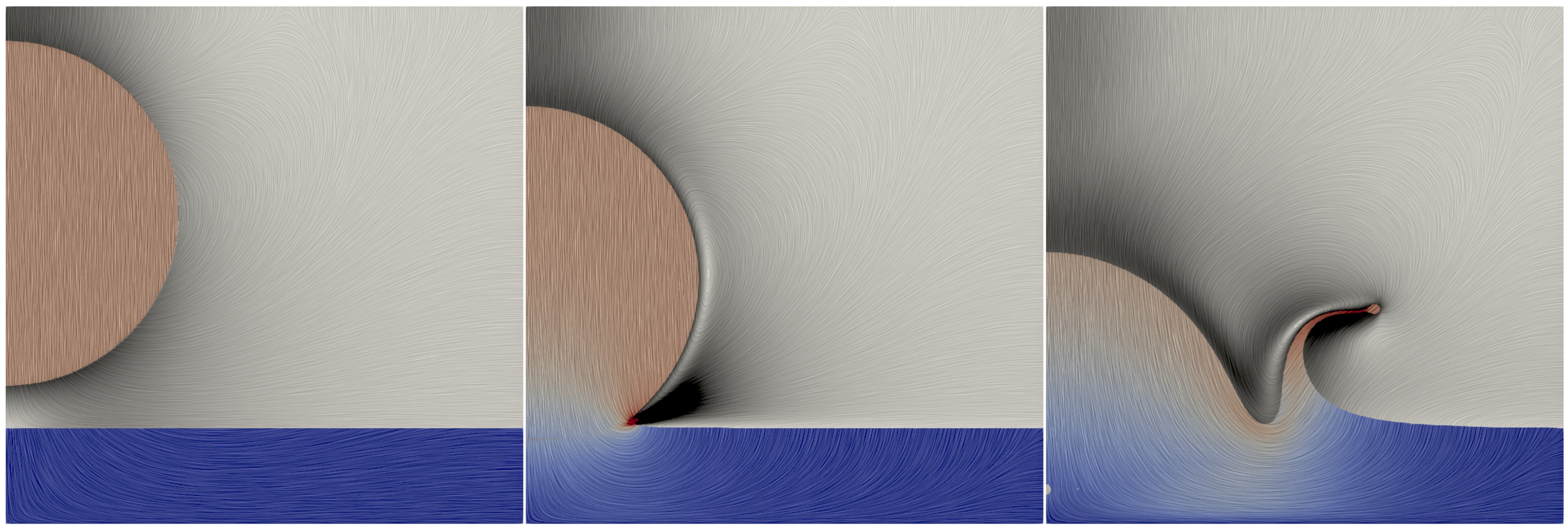}
    \caption{Line integral convolution visualization of the droplet impact dataset at initial condition, shortly after impact, and during crown formation. 
    In general, the color encodes the velocity magnitude of the fluid.
    For gas, the colormap is gray-scaled, whereas, for liquid, a blue-to-red colormap (low to high velocity) is used.}
	\label{fig:droplet_overview}
\end{figure}

\begin{figure*}[!t]
    \centering
    \includegraphics[width=\textwidth]{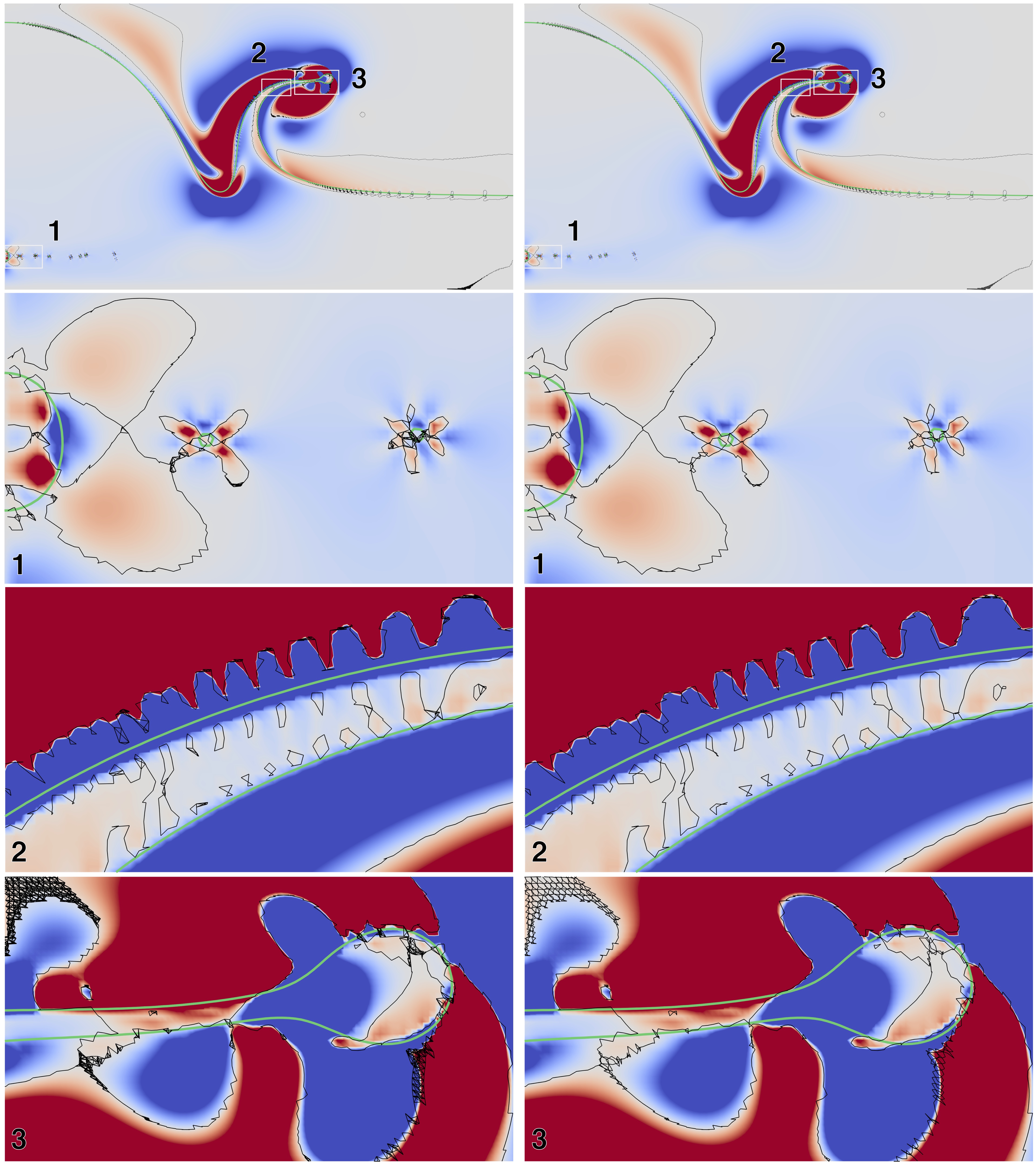}
    \caption{Comparison of the original connectivity construction (left column) and our reduced connectivity (right column) for the LB computed Jacobi set of the droplet dataset.
    The color coding (blue-to-red) shows the gradient alignment field of two consecutive time steps.
    The black solid lines correspond to the extracted Jacobi set and the green solid line indicates the interface between liquid and gas.
    }
	\label{fig:droplet}
\end{figure*}

Droplet impact onto thin wall films is a fundamental process in a lot of modern technological applications and natural processes. 
Over the past years, a lot of experimental and numerical research has been done in this area. 
In this context, the classification of the impact outcome, e.g., splashing with the generation of secondary droplets or deposition, was investigated. 
Furthermore, the crown shape and temporal evolution are of great interest~\cite{roisman_tropea_2002}. 
Besides experimental ~\cite{geppert2017benchmark} and numerical investigations, an analytical model for the evolution of the base radius of the crown was proposed~\cite{lamanna2022drop}.

The utilized Direct Numerical Simulation of a droplet impact onto a thin wall film\footnote{The simulation results of a droplet impact onto a liquid film are publicly available in~\cite{DropletDataDarus}.} was performed with the program package Free Surface 3D~\cite{EISENSCHMIDT2016508}.
The code solves the incompressible Navier-Stokes equations in a one-field formulation with a Volume of Fluid approach. 
It takes all relevant forces into account, being inertia of the liquid and the ambient gas, the surface tension of the interface, as well as friction losses, and gravity.

Exploiting the symmetry of the phenomenon, the 3D simulation has been reduced to a quarter of the droplet. 
The investigated dataset is a slice through the symmetry plane according to Figure~\ref{fig:droplet_overview}.
For each of the 85 time steps, the dataset consists of velocity vectors and a phase indicator function, which distinguishes between liquid and gas.
The resolution is $1024\times 1024$. 
Three snapshots are illustrated in Figure~\ref{fig:droplet_overview} showing the initial condition, a time step shortly after droplet impact, and the formation of the crown.
For the computation of the Jacobi set, the time step 83 (equal to the last snapshot in Figure~\ref{fig:droplet_overview}) is used.
To be more precise, the $x$ and $y$ components (horizontally and vertically) of the velocity field are used as scalar fields $f$ and $g$ as input for the computation of $\mathbb{J}(f,g)$.

Similar to the previous dataset, the number of resulting edges of the reduced connectivity construction is approximately $25\%$ lower, as presented in Table~\ref{tbl:edge_count}. 
Since the droplet dataset is around four times larger than the Hurricane Isabel dataset, the computation of Jacobi set points is more expensive, whereas the non-reduced and reduced connectivity methods require less time (due to the lower number of critical edges), as shown in Table~\ref{tbl:compute_times}.
The qualitative results are shown in Figure~\ref{fig:droplet} (only the lower section of the dataset is used).

Since the visualizations in the top row in Figure~\ref{fig:droplet} do not provide detailed insights into the dataset, three characteristic sections for the droplet impact onto thin wall films are used for the investigation and comparison of the Jacobi set representations.

In the first zoomed-in area a large air bubble and two tiny air bubbles can be observed in the dataset, which are represented by the green line highlighting the interface between liquid and gas. 
Whereas the large bubble is in the center line of impact and is examined in many experimental studies, the tiny bubbles are numerical relicts that are not observed in experiments.
The Jacobi set identifies topological structures around the bubbles.
These structures appear due to the topological change of velocity.
In fact, the larger bubble comes along with much larger topological areas.
As noted before, our method generally unclutters parts of the representation. 
This aspect is clearly visible in the tiny bubble on the right, where the reduced connectivity has a clearer Jacobi representation.

The second zoomed-in area shows a part of the highly curved crown. 
In this region, the liquid has a high velocity. 
The small structures lying inbetween the green interfaces are represented and separated more distinctly with our method. 
In particular, we observe in the top right part a clearer circular topological structure.

Finally, in the third zoomed-in area the thicker rim bounding the thin crown is shown. 
In this area, the surface tension forces of the interface and the inertial forces of the liquid play an important role. 
Here, we observe the same general characteristics of our new reduced connectivity construction: Filled areas get thinned out (top left), 1-manifold geometrical structures get preserved (bottom middle), and small-scale textures are more clearly represented (at the rim bounding the crown). 
In sum, our method enables the reliable and fast identification of topological structures within the droplet impact for further analysis of the phenomena.

\section{Conclusion and Future Work}
\label{sec:conclusion}

This paper introduced a new connectivity method for the local bilinear computation of Jacobi sets~\cite{Kloetzl:2022:LocalBilinearJS}.
The method combines the advantages of different topological collapses to avoid the problem of visualizing higher-dimensional simplices.
The resulting reduced connectivity leads to a representation that is less cluttered while topology and geometry are still preserved.
In addition, important properties such as the \hyperref[lem:EDL]{Even Degree Lemma} still hold and the provided deterministic algorithm comes with only little overhead.
Hence, our proposed reduced connectivity enhances the visualization and analysis of scalar fields via bilinearly computed Jacobi sets.

In the future, we want to use our method for the study of different phenomena including, for instance, fluid dynamics.
In this regard, we want to focus not only on the topological structure but also on the geometrical configuration, which our method indeed accounts for.
Another research direction could be to visualize the higher-dimensional simplices in a completely different way, e.g., to use a color highlighting for the dimensionality such that different connectivity patterns could be emphasized.  

\acknowledgments{The authors thank Anne Geppert and Stefan Schubert for the fruitful discussions about Jacobi sets and droplet impacts during the Droplet Interaction Technologies (DROPIT) summer school 2022.

This research was supported by the Deutsche Forschungsgemeinschaft (DFG, German Research Foundation) through the grants DFG 270852890-GRK 2160/2 and DFG 251654672–TRR 161, the Swedish Research Council (VR) under the grant 2019-05487, the U.S. Department of Energy (DOE) under the grant DOE DE-SC0021015, and the National Science Foundation (NSF) through the grant NSF IIS-1910733.}

\bibliographystyle{abbrv-doi}

\bibliography{main}
\end{document}